\documentclass{scrartcl}

\usepackage[a4paper,margin=1in, left=7em]{geometry}
\usepackage{amsmath, amssymb}
\usepackage{mathrsfs}
\usepackage{enumitem}
\usepackage[bibliography=common]{apxproof}
\usepackage{stmaryrd}
\usepackage{graphicx}
\usepackage{tikz}
\usepackage{caption, subcaption}
\usepackage{hyperref}

\usepackage[
    type={CC},
    modifier={by},
    version={4.0},
]{doclicense}

\newcommand{\gridstep}{1}
\newcommand{\offsetrad}{.75}
\newcommand{\nodel}[4]{\node[#4] (#1) at (#3) {$\bullet v_{#2}$}}
\newcommand{\nodelc}[2]{\nodel{#1}{#1}{c#1}{#2}}
\newcounter{nodecounter}
\setcounter{nodecounter}{0}
\newcommand{\nodeautocol}[1]{\addtocounter{nodecounter}{1} \nodelc{\thenodecounter}{#1}}
\newcommand{\nodeauto}{\addtocounter{nodecounter}{1} \nodelc{\thenodecounter}{}}
\newcommand{\larcrad}{.375}
\newcommand{\barcrad}{.5}
\newcommand{\twohyperedgecol}[4]{\draw[#4] ([yshift=-#3 cm] #1) arc (-90:90:#3) -- ([yshift=#3 cm] #2) arc (90:270:#3) -- cycle}
\newcommand{\twohyperedge}[3]{\twohyperedgecol{#1}{#2}{#3}{}}
\newcommand{\twohyperedgevcol}[4]{\draw[#4] ([xshift=#3 cm] #1) arc (0:180:#3) -- ([xshift=-#3 cm] #2) arc (-180:0:#3) -- cycle}
\newcommand{\twohyperedgev}[3]{\twohyperedgevcol{#1}{#2}{#3}{}}

\newcommand{\trighyperedge}[3]{
  \draw ([shift={(90:-\larcrad)}] #1) arc (-90:90:\larcrad) -- ([shift={(90:\larcrad)}] #2) arc (90:270:\larcrad) -- cycle;
  \draw ([shift={(30:-\larcrad)}] #1) arc (-150:30:\larcrad) -- ([shift={(30:\larcrad)}] #3) arc (30:210:\larcrad) -- cycle;
  \draw ([shift={(-30:-\larcrad)}] #2) arc (150:330:\larcrad) -- ([shift={(-30:\larcrad)}] #3) arc (-30:150:\larcrad) -- cycle
}

\title{Confluence of the Node-Domination and Edge-Domination Hypergraph Rewrite Rules}
\author{Antoine Amarilli$^1$, Mikaël Monet$^1$, Rémi De Pretto$^{1,2}$\\
$^1$: Univ.\ Lille, Inria, CNRS, Centrale Lille, UMR 9189 CRIStAL\\
$^2$: École supérieure de chimie, physique, électronique de Lyon}
\date{}

\usepackage{mathtools}
\usepackage{hyperref}
\hypersetup{
    colorlinks,
    linkcolor={red!50!black},
    citecolor={blue!50!black},
    urlcolor={blue!30!black}
}

\usepackage[capitalise,nameinlink,noabbrev]{cleveref}

\usepackage{algorithmic}
\usepackage[Algorithm]{algorithm}

\usepackage{xcolor}
\definecolor{bordeaux}{rgb}{0.509803922, 0.039215686, 0.039215686}

\usepackage{amsthm}
\theoremstyle{plain}
\newtheorem{theorem}{Theorem}[section]

\theoremstyle{definition}
\newtheorem{definition}[theorem]{Definition}
\newtheorem{proposition}[theorem]{Property}
\newtheorem{corollary}[theorem]{Corollary}
\newtheorem{lemma}[theorem]{Lemma}
\newtheorem{claim}[theorem]{Claim}
\newtheorem{example}[theorem]{Example}

\AddToHook{env/definition/begin}{\crefalias{theorem}{definition}}
\AddToHook{env/proposition/begin}{\crefalias{theorem}{proposition}}
\AddToHook{env/corollary/begin}{\crefalias{theorem}{corollary}}
\AddToHook{env/lemma/begin}{\crefalias{theorem}{lemma}}
\AddToHook{env/claim/begin}{\crefalias{theorem}{claim}}
\AddToHook{env/example/begin}{\crefalias{theorem}{example}}
\crefname{figure}{Figure}{Figures}

\newcommand{\hg}{\mathcal}

\newcommand{\iso}{\equiv}

\newcommand{\chg}[1]{[\mathcal #1]_\iso}
\newcommand{\toe}{\to_{\text{edge}}}
\newcommand{\ton}{\to_{\text{node}}}

\begin{document}

\maketitle

\begin{abstract}
  In this note, we study two rewrite rules on hypergraphs, called
  \emph{edge-domination} and \emph{node-domination}, and show that they are
  confluent. These rules are rather natural and commonly used before computing
  the minimum hitting sets of a hypergraph. Intuitively, edge-domination allows
  us to remove hyperedges that are supersets of another hyperedge, and
  node-domination allows us to remove nodes whose incident hyperedges are a
  subset of that of another node. We show that these rules are
  confluent up to isomorphism, i.e., if we apply any sequences of
  edge-domination and node-domination rules, then the resulting hypergraphs can
  be made isomorphic via more rule applications. This in particular implies the
  existence of a unique \emph{minimal hypergraph}, up to isomorphism.
\end{abstract}

\section{Introduction}

This note studies the confluence of two natural rewrite rules on hypergraphs,
called the \emph{edge-domination} and \emph{node-domination} rules. Informally,
given a hypergraph $\hg H$, the edge-domination rule consists in removing a
hyperedge $e'$ of $\hg H$ if there is another hyperedge $e$ which is contained
in~$e'$, and the node-domination rule consists in removing a node $v$ of $\hg
H$ if there is another node $v'$ which is incident to a superset of the
hyperedges to which $v$ is incident. We give a formal definition of these rules
later in this note.

These two rules have been commonly used for decades, especially when studying
the minimum hitting sets of hypergraphs. Recall that a \emph{hitting set} of a
hypergraph (also called \emph{hypergraph transversal}) is a subset of its nodes
such that every hyperedge contains a node of the hitting set, and a
\emph{minimum hitting set}\footnote{This should not be confused with a
\emph{minimal hitting set}, which refers to minimality under set inclusion, and
which can be tractably computed via a greedy algorithm.} is one of minimal cardinality. It is well-known that
computing minimum hitting sets is NP-hard, because this
includes as a special case the \emph{vertex cover} problem (when the input
hypergraph is a graph, i.e., every hyperedge has cardinality~$2$). The point of
the edge-domination and node-domination rules is that we can apply them to
eliminate some hyperedges and nodes of the input hypergraph in a way  that does
not affect the size of a minimum hitting set. Intuitively this is because
dominated hyperedges will always be covered by the hyperedges that they
contain; and (strictly) dominated nodes will never be part of a vertex cover.
See \cite[Claim~4.8]{biblio:AGMM25} for a formal proof. Thus, the
edge-domination and node-domination rules can be used as an easy
polynomial-time preprocessing on hypergraphs before computing minimum hitting
sets.

In this note, we do not discuss the computation of minimum hitting sets, but
study these rules in their own right. Our goal is to formally show that the
rules are \emph{confluent up to isomorphism}. In other words, consider a hypergraph
$\hg H$. Apply
node-domination and edge-domination rules to $\hg H$ in some fashion to obtain
a hypergraph $\hg H_1$, and apply the rules to $\hg H$ in a different fashion to
obtain a hypergraph $\hg H_2$.
Can we transform $\hg H_1$ and $\hg H_2$ into isomorphic hypergraphs by applying
rules in some fashion on $\hg H_1$ and $\hg H_2$, or is it possible that we
can obtain genuinely different hypergraphs depending on which rules we apply?
Indeed, it is
not difficult to see that~$\hg H_1$ and~$\hg H_2$ may be different~-- e.g., when
we have the choice between removing one of two ``twin'' nodes that are incident to
precisely the same set of hyperedges. However, as we will see, we can
necessarily converge to isomorphic hypergraphs with further rule applications.

In other words, the point of this note is to show that, for any hypergraph $\hg
H$, if we obtain~$\hg H_1$ and~$\hg H_2$ by applying different sequences of rules from $\hg H$,
then we can still converge to isomorphic hypergraphs. 
This fact is rather intuitive, but to our knowledge the question had not been
addressed by previous works. We show:

\begin{theorem}\label{thm:RR confluence informal}
  For any hypergraph $\hg H$, let $\hg H_1$ and $\hg H_2$ be two hypergraphs
  obtained from $\hg H$ by iterative application of edge-domination and
  node-domination rules. Then, there are two hypergraphs $\hg H_1'$ and $\hg
  H_2'$ that are isomorphic and that can be respectively obtained from $\hg H_1$ and $\hg H_2$ by
  application of these rules.
\end{theorem}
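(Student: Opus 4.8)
The plan is to phrase the problem as an abstract rewriting system on \emph{isomorphism classes} of hypergraphs and apply Newman's lemma. Concretely, I would take the objects to be the classes $\chg{H}$, and set $\chg{H} \to \chg{H'}$ whenever some representative of $\chg{H}$ rewrites to some representative of $\chg{H'}$ by a single edge- or node-domination step; since isomorphisms preserve applicability of the rules and the result up to isomorphism, this relation is well defined. The quantity $\mu(\chg{H}) = |V(\hg H)| + |E(\hg H)|$ is isomorphism-invariant and strictly decreases along every step (edge-domination deletes a hyperedge; node-domination deletes a node, and in the set-of-hyperedges model may additionally merge hyperedges), so the system is terminating. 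By Newman's lemma it then suffices to prove local confluence, and, after transporting isomorphisms between representatives, it is enough to establish the following statement on concrete hypergraphs: if $\hg H \to \hg A$ and $\hg H \to \hg B$ by single steps, then $\hg A$ and $\hg B$ have reducts $\hg A'$ and $\hg B'$, reachable by further rule applications, with $\hg A' \iso \hg B'$. I would prove this by a case analysis on the two rules involved, writing $E_v$ for the set of hyperedges incident to a node $v$.

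If both steps are edge-dominations, say removing $a'$ (dominated by $a \subseteq a'$) and $b'$ (dominated by $b \subseteq b'$): if $a' = b'$ the results coincide; otherwise $b'$ is still present in $\hg A$ and still dominated there — by $b$ if $b \ne a'$, and by $a$ if $b = a'$ (since then $a \subsetneq b \subsetneq b'$) — so we delete it, and symmetrically delete $a'$ from $\hg B$, reaching the same hypergraph $(V, E \setminus \{a', b'\})$ on both sides. If one step is an edge-domination removing $a'$ and the other a node-domination removing $w$ (dominated by $w'$): after deleting $a'$ the incident sets of $w$ and $w'$ each shrink by at most $\{a'\}$, so $w'$ still dominates $w$, and we get $\hg A \to \hg A'$ by removing $w$. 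In $\hg B$ (which is $\hg H$ with $w$ deleted) the hyperedge $a' \setminus \{w\}$ is either already equal to the image of some hyperedge of $\hg H$ other than $a'$ — in which case $\hg A' = \hg B$ already — or it is a proper superset of the still-present hyperedge $a \setminus \{w\}$, so an edge-domination removes it from $\hg B$, again yielding $\hg A'$. Either way the two sides have a common reduct.

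The remaining case, two node-dominations removing $u$ (via $u'$) and $w$ (via $w'$) with $u \ne w$, is where ``up to isomorphism'' is genuinely needed. I would first show that $w$ is still dominated after deleting $u$ unless $E_u = E_w$: if $w' \ne u$ then $w'$ still dominates $w$; if $w' = u$ but $u' \ne w$ then $u'$ dominates $w$, because $E_w \subseteq E_u \subseteq E_{u'}$ and deleting $u$ preserves this inclusion on the remaining nodes; and if $w' = u$ and $u' = w$ then $E_w \subseteq E_u \subseteq E_w$. So when $E_u \ne E_w$, one further node-domination on each side reaches the common hypergraph $(V \setminus \{u,w\}, \{e \setminus \{u,w\} : e \in E\})$. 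When $E_u = E_w$ (the ``twin'' case), the bijection fixing every node except sending $w$ to $u$ is an isomorphism from $\hg A$ (i.e.\ $\hg H$ with $u$ deleted) to $\hg B$ ($\hg H$ with $w$ deleted): it matches $e \setminus \{u\}$ with $e \setminus \{w\}$ for each $e \in E$, using that $e \ni u \iff e \ni w$. Hence $\hg A \iso \hg B$ directly, with no further rewriting.

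The main obstacle is precisely the care needed in the node-domination cases: ensuring that deleting a node does not silently destroy a dominator we intend to reuse (handled by the ``$w' = u$'' subcase above) and that the merging of hyperedges when a dominated node is removed does not make the two sides of the edge/node case drift apart (handled by the dichotomy in the second paragraph). Everything else is routine bookkeeping about deletions. Once the concrete local-confluence statement is proved, Newman's lemma together with the transport of isomorphisms between representatives yields \cref{thm:RR confluence informal}: applying confluence of the isomorphism-class system to the given reduction sequences $\hg H \to^* \hg H_1$ and $\hg H \to^* \hg H_2$ gives a common reduct of $\chg{H_1}$ and $\chg{H_2}$, which lifts to hypergraphs $\hg H_1'$ and $\hg H_2'$ obtained from $\hg H_1$ and $\hg H_2$ by rule applications with $\hg H_1' \iso \hg H_2'$.
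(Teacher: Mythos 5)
Your overall route --- termination via a size measure, Newman's lemma applied to isomorphism classes, and a local-confluence case analysis in which an explicit isomorphism is produced only in the ``twin'' situation --- is the same as the paper's. The gap is that you carry out the case analysis in the classic model where a hyperedge \emph{is} its vertex set, whereas the paper's hypergraphs are triples $(V,E,I)$ with an incidence relation: two distinct hyperedges $e \neq e'$ with $V(e) = V(e')$ are allowed, edge-domination only requires the non-strict inclusion $V(e) \subseteq V(e')$, and deleting a node never merges hyperedges. This matters concretely in your edge--edge case. Your justification ``by $a$ if $b = a'$ (since then $a \subsetneq b \subsetneq b'$)'' relies on strictness of the domination inclusions, which is unavailable here, and it silently excludes the sub-case where $b = a'$ and $a = b'$ hold simultaneously. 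In that sub-case $a \neq b$ are two parallel hyperedges with $V(a) = V(b)$, each a legitimate witness for removing the other; after one is removed, the survivor need not be removable any more (take $V = \{x\}$ and exactly the two edges $a,b$ incident to $x$: both outcomes are minimal and distinct), so the two branches cannot in general be rejoined by further rewriting to \emph{the same} hypergraph, contrary to what you claim. One must instead observe, exactly as in your twin-node argument, that the two results are isomorphic via the identity on nodes and the edge bijection sending $b$ to $a$; this is the last sub-case of the paper's case (i), and by the node/edge symmetry of the incidence model the same phenomenon recurs in its node--node case (iv).

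The fix is immediate --- mirror your own $E_u = E_w$ argument on the edge side --- but as written the proposal does not establish \cref{thm:RR confluence under isomorphism} for the paper's notion of hypergraph. A minor related remark: in the mixed edge/node case, your dichotomy about $a' \setminus \{w\}$ possibly merging with another hyperedge is an artefact of the set model; in the incidence model node deletion never merges edges, the inclusion $V(a) \setminus \{w\} \subseteq V(a') \setminus \{w\}$ together with $a \neq a'$ always licenses the second edge-domination, and the two deletions simply commute (the paper's case (ii)), so that case becomes simpler rather than harder once the right model is used.
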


In particular, this directly implies that minimal hypergraphs are unique up to
isomorphism, in the following sense:

\begin{corollary}
  \label{cor:maincor}
  Let $\hg H$ be an hypergraph, and let $\hg H_1$ and $\hg H_2$ be two
  hypergraphs obtained from $\hg H$ by iterative application of edge-domination
  and node-domination rules such that it is not possible to apply any rule to
  $\hg H_1$ and $\hg H_2$. Then $\hg H_1$ and $\hg H_2$ are isomorphic.
\end{corollary}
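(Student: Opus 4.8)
The plan is to derive \Cref{cor:maincor} directly from \Cref{thm:RR confluence informal}, using the elementary observation that a hypergraph to which no rule applies cannot be rewritten into anything other than itself.

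First I would record a termination-style fact: every single application of the edge-domination rule or of the node-domination rule strictly decreases the total number of nodes and hyperedges of the hypergraph, since edge-domination deletes exactly one hyperedge, node-domination deletes exactly one node, and neither operation creates new nodes or new hyperedges. Consequently, if a hypergraph $\hg H'$ is obtained from a hypergraph $\hg H''$ by a \emph{nonempty} sequence of rule applications, then $\hg H'$ has strictly fewer nodes and hyperedges in total than $\hg H''$, and in particular $\hg H' \ne \hg H''$. Hence, if no rule can be applied to a hypergraph, then the only hypergraph reachable from it~-- by any sequence of rule applications, empty or not~-- is that hypergraph itself.

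Now I would apply \Cref{thm:RR confluence informal} to the hypergraphs $\hg H$, $\hg H_1$, $\hg H_2$ of the corollary: it yields isomorphic hypergraphs $\hg H_1'$ and $\hg H_2'$ obtained from $\hg H_1$ and $\hg H_2$ respectively by rule applications. By hypothesis no rule applies to $\hg H_1$, so by the previous paragraph $\hg H_1' = \hg H_1$, and symmetrically $\hg H_2' = \hg H_2$. Therefore $\hg H_1 = \hg H_1' \iso \hg H_2' = \hg H_2$, which is exactly the statement of the corollary.

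There is essentially no obstacle here: all of the difficulty has been packaged into \Cref{thm:RR confluence informal}, and the only thing to check independently is the size-decreasing property of the two rules, which is immediate from their formal definitions (given later in the note). The one subtlety worth flagging is that \Cref{thm:RR confluence informal} must be read so that the empty sequence of rule applications is permitted, so that $\hg H_1'$ and $\hg H_2'$ are genuinely allowed to coincide with $\hg H_1$ and $\hg H_2$; otherwise one would simply note separately that a hypergraph admitting no rule is trivially its own unique reachable hypergraph, and the argument goes through verbatim.
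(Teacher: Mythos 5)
Your proof is correct and matches the paper's own argument: the paper likewise derives this corollary from the confluence theorem by noting that, since no rule applies to $\hg H_1$ and $\hg H_2$, the hypergraphs reached from them must be $\hg H_1$ and $\hg H_2$ themselves, hence $\hg H_1 \iso \hg H_2$. Your extra size-decreasing observation is harmless but not needed, since minimality already rules out any nonempty rewriting sequence.
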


The rest of this note is devoted to a proof of \cref{thm:RR confluence
informal}. It is established using Newman's lemma~\cite{biblio:N42,
biblio:K85}, i.e., by showing local confluence. Specifically, it suffices to
establish the following local confluence claim: for any hypergraph $\hg H$, if
we can obtain $\hg H_1$ and $\hg H_2$ from $\hg H$ in one step (by two
different ways to perform a single rule application), then we can
obtain some hypergraph $\hg H_3$ (up to isomorphism) from $\hg H_1$ and $\hg
H_2$ in a (possibly empty) sequence of steps. The local confluence of the rules
is itself shown by a case distinction, depending on which of the two rules were
applied to obtain $\hg H_1$ and to obtain $\hg H_2$.

\paragraph*{Related work.}
Recall that node-domination and edge-domination have been used for decades
in the study of the minimum hitting
sets of hypergraphs. A review of the use of these rules is presented by Fernau
in~\cite{biblio:Fer10} (and credited to Barretta), with the oldest reference
given being the book of Garfinkel and Nemhauser
\cite[Chapter 8]{biblio:GN72}.
Since then, the rules have been used in many other works: see
\cite{biblio:Wund24} for a recent review.
Some generalisations of the rules have also been proposed, e.g., the
\emph{weighted vertex domination} rule of~\cite[Subsection 2.2]{biblio:Fer10WHS}
generalises node-domination in the setting of weighted hypergraphs.
We note that there are also other rules which can be used as preprocessing for
minimum hitting set problems, such as the \emph{unit hyperedge rule}
of \cite[Section III]{biblio:SC10}: we do not study such rules in this note.
Beyond algorithms for minimum hitting sets, the rules can also be used when
studying hypergraph transversals for other purposes, such as the study of the
resilience problem for databases
in~\cite{biblio:AGMM25}: this is in fact the original motivation of our study
in the present note.

\subparagraph*{Paper structure.} We give preliminaries and give the formal
statement of the problem in
\cref{sec:prelim problem}, rephrasing our informal claim above (\cref{thm:RR
confluence informal}) to a formal claim on confluence
(\cref{thm:RR confluence under isomorphism}). Then, we present the proof of
\cref{thm:RR confluence under isomorphism} in 
\cref{sec:proof}.

\section{Preliminaries}
\label{sec:prelim problem}

\paragraph*{Hypergraphs.}
A \emph{hypergraph} is intuitively defined like a graph but with edges that can be incident to more
than two nodes.
We formalise a hypergraph $\hg H$ as a 3-tuple $(V,E,I)$
where~$V$ is a finite set of \emph{nodes},
$E$ is a finite set of \emph{hyperedges} (or simply \emph{edges}), and 
$I \subseteq V \times E$ is a binary relation called the \emph{incidence relation} :
for $v \in V$ and $e \in E$, we have $v \mathrel{I} e$
(i.e., $(v,e) \in I$) if $e$ is \emph{incident} to $v$.
For $v \in V$ a node, we define~$E(v)$ as the set of edges incident to $v$,
i.e., $E(v) \coloneq \{e \in E \colon v \mathrel{I} e\}$. Symmetrically, for $e
\in E$ an edge, we define $V(e)$
as the set of nodes that $e$ is incident to, i.e., $V(e) \coloneq \{v \in V
\colon v \mathrel{I} e\}$.

Note that our definition of hypergraphs is a generalisation of the classic
definition where the hyperedges of $E$ is a set of subsets of~$V$. Indeed,
unlike that definition, our definition makes it possible to have two distinct hyperedges
containing the same set of nodes. 
For this reason, the hypergraphs that we define here can also be seen as
a notion of \emph{multi-hypergraphs}, i.e., hypergraphs defined according to the
classic definition but with $E$ being
a \emph{multiset}
of subsets of~$V$. However, in the present note, we use the definition with an
incidence relation given in the paragraph above, because it makes the role of
hyperedges and nodes symmetric. 
Moreover, the relation $I$ can be viewed as a matrix $M$, called the
\emph{incidence matrix}, where rows and columns represent nodes and hyperedges, respectively.
For $v$ a node and $e$ a hyperedge, the cell $M[v,e]$ of the matrix~$M$
contains $1$ if $v \mathrel{I} e$ and $0$ otherwise.

\paragraph*{Rewrite rules on hypergraphs.}
A \emph{rewrite rule} on hypergraphs is a process that, given a hypergraph $\hg H$ as input, transforms this
hypergraph into another hypergraph $\hg H'$. Here we present three rewrite rules,
the third rule being the union of the first two.

\begin{definition}[Edge-domination]\label{def:edge-domination}
  Let $\hg H = (V,E,I)$ be a hypergraph. If there are two edges $e,e' \in E$ such that $e \neq e'$ and
  $e'$ is incident to a superset of the nodes to which $e$ is incident,
  then $\hg H' = (V',E',I')$ is obtained by removing $e'$ from $\hg H$.
  Formally, if $e \neq e'$ and $V(e) \subseteq V(e')$ then 
  we set $V' \coloneq V$, $E' \coloneq E \setminus \{e'\}$ and $I' \coloneq \{(v,f) \in I \colon f \neq e'\}$.
  We denote this by $\hg H \toe \hg H'$.
\end{definition}

Notice that this definition is consistent with the usual one where hyperedges
are sets of nodes: the rule then says that we can remove a hyperedge if it is a
superset of another hyperedge. Note that the inclusion is not necessarily
strict, because in our definition of hypergraphs we can have multiple edges
containing the same set of nodes: in this case we can choose any of the edges
and remove it using the rule.

\begin{definition}[Node-domination]\label{def:node-domination}
  Let $\hg H = (V,E,I)$ be a hypergraph. If there are two nodes $v,v' \in V$ such that $v \neq v'$ and
  $v'$ is incident to a superset of the edges to which $v$ is incident,
  then $\hg H' = (V', E')$ is obtained by removing $v$ from $\hg H$.
  Formally, if $E(v) \subseteq E(v')$ then we set
  $V' \coloneq V \setminus \{v\}$, $E' \coloneq E$ and $I' \coloneq \{(u,e) \in I \colon u \neq v\}$.
  We denote this by $\hg H \ton \hg H'$.
\end{definition}

For $\hg H$ a hypergraph, if $\hg H'$ is obtained from~$\hg H$ by some
application of a rewrite rule above,
then we write $\hg H \to \hg H'$ (i.e. ${\to} = {\toe} \cup {\ton}$).
If $\hg H'$ is obtained by applying some (possibly empty) sequence of rewrite rules,
we call $\hg H'$ a \emph{reduced hypergraph} of $\hg H$ and denote it $\hg H \to^* \hg H'$.
In other words, $\to^*$ is the reflexive and transitive closure of $\to$.

Since each rewrite rule eliminates either a
node or a hyperedge, and since we work with finite hypergraphs only, it is clear that any sequence of rule
applications must terminate:

\begin{claim}\label{claim:RR terminates}
  The rewrite rule $\to$ \emph{terminates}, i.e., for every hypergraph $\hg H$, any sequence of
  the form $\hg H = \hg H_0 \to \hg H_1 \to \hg H_2 \to \ldots$ must be finite.
\end{claim}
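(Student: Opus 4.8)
The plan is a standard termination argument via a strictly decreasing $\mathbb{N}$-valued measure (a monovariant). For a hypergraph $\hg H = (V,E,I)$, define $\mu(\hg H) \coloneq |V| + |E|$. Since $\hg H$ is finite by assumption, $\mu(\hg H)$ is a well-defined natural number.

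First I would observe that a single application of $\to$ strictly decreases $\mu$. Indeed, by definition of $\to$ we have either $\hg H \toe \hg H'$ or $\hg H \ton \hg H'$. In the first case, \cref{def:edge-domination} sets $V' = V$ and $E' = E \setminus \{e'\}$ with $e' \in E$, so $|V'| = |V|$ and $|E'| = |E| - 1$, hence $\mu(\hg H') = \mu(\hg H) - 1$. In the second case, \cref{def:node-domination} sets $V' = V \setminus \{v\}$ with $v \in V$ and $E' = E$, so $\mu(\hg H') = \mu(\hg H) - 1$ as well. In both cases $\mu(\hg H') < \mu(\hg H)$.

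Then, given a sequence $\hg H = \hg H_0 \to \hg H_1 \to \hg H_2 \to \ldots$, the sequence of natural numbers $\mu(\hg H_0) > \mu(\hg H_1) > \mu(\hg H_2) > \ldots$ is strictly decreasing, and therefore has at most $\mu(\hg H_0) + 1$ terms; equivalently, the sequence of rule applications has length at most $\mu(\hg H) = |V| + |E|$. Hence any such sequence is finite, which is what we wanted to show. There is no real obstacle here: the only thing to be careful about is to check both rules separately, since each removes a different kind of object, but both decrease $\mu$ by exactly one.
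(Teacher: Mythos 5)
Your proof is correct and follows the same idea as the paper, which justifies the claim simply by noting that each rule application removes either a node or a hyperedge from a finite hypergraph; your monovariant $\mu(\hg H) = |V| + |E|$ is just an explicit formalisation of that observation, including the exact bound $|V|+|E|$ on the length of any rewriting sequence.
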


This motivates the definition of a \emph{minimal} hypergraph:

\begin{definition}[Minimal hypergraph] \label{def:minimal hypergraph}
  For $\hg H$ a hypergraph, if we cannot apply any rewrite rules
  to $\hg H$ (formally if there is no hypergraph $\hg H'$ such that
  $\hg H \to \hg H'$), then we say that $\hg H$ is \emph{minimal}.
  We call $\hg H'$ a \emph{minimal hypergraph of $\hg H$} if $\hg H'$
  is a reduced hypergraph of $\hg H$ which is minimal.
\end{definition}

  By \cref{claim:RR terminates}, one gets that
  every hypergraph has at least one minimal hypergraph. The question is then to
  understand how different choices of applying the 
  rules can lead to different minimal hypergraphs. Note
  that this can happen already in simple cases:

\begin{example}
  \label{exa:nonconfluent}
  Let $\hg H = (V,E,I)$ be defined by taking $V = \{v_1, v_2\}$ and $E$ consisting of
  a single edge $e$ incident to $v_1$ and $v_2$. Then we can apply the node-domination rule to
  remove~$v_1$, yielding $\hg H_1 = (\{v_2\}, \{e\}, \{(v_2,e)\})$; or we can
  apply the same rule to remove~$v_2$,
  yielding $\hg H_2 = (\{v_1\}, \{e\},\{(v_1,e)\})$ (see \cref{fig:twin nodes}).
  The hypergraphs $\hg H_1$ and $\hg H_2$ are both minimal 
  hypergraphs of~$\hg H$, and they are different~-- but they are isomorphic.
  See \cref{fig:confluence iso} for another example.
  We will show that the minimal hypergraphs of any hypergraph are always
  isomorphic.
\end{example}

\begin{example}
  \label{exa:alternating rule}
  Let $\hg H = (V,E,I)$ be a hypergraph with $V = \{v_1, v_2, \dots, v_7\}$ and
  $E = \{e_1, e_2, \dots, e_7\}$ such that for all $i \in \llbracket 2,7 \rrbracket$,
  we have $v_i \mathrel I e_i$, $v_i \mathrel I e_{i-1}$ and $v_1 \mathrel I e_1$ and $v_5 \mathrel I e_7$
  (see \cref{fig:sequence one rule}).
  Only one rule can be applied to $\hg H$ which is an application of
  node-domination to erase $v_1$.
  Then, the hypergraph obtained has a only one rule that can be applied,
  the edge-domination rule that removes $e_2$. We then continue along similar
  lines, starting with $v_3$.
  This example shows that applying a rule can allow to apply another one.
  Moreover, in this sequence of rules, the node-domination and the
  edge-domination rules alternate. Notice that we can generalise this example to a hypergraph
  of an arbitrary length making the sequence of rules arbitrary long, and making
  the number of alternations arbitrarily large.
\end{example}

\begin{figure}[t]
\centering
\begin{minipage}[b]{.44\linewidth}
\begin{subfigure}[b]{\linewidth}
\begin{center}
\begin{subfigure}[b]{\linewidth}
\begin{center}
\begin{tikzpicture}
  \coordinate (c1) at (\gridstep,0);
  \coordinate (c2) at (0,0);
  \coordinate (c3) at (-1.5,-1.5);
  \coordinate (c4) at (2.5,-1.5);
  \setcounter{nodecounter}{0}
  \nodeautocol{red};
  \nodeautocol{blue};
  \nodel{3}{1}{c3}{};
  \nodel{4}{2}{c4}{};
  \twohyperedge{c1}{c2}{\larcrad};
  \twohyperedge{c3}{c3}{\larcrad};
  \twohyperedge{c4}{c4}{\larcrad};
  \node (5) at ([xshift=2 cm] c3) {$\equiv$};
  \draw[->, >=latex, blue] ([shift={(-135:.5)}] c2) to node[below right]{node} ([shift={(45:.5)}]c3);
  \draw[->, >=latex, red] ([shift={(-45:.5)}] c1) to node[below left]{node} ([shift={(135:.5)}]c4);
\end{tikzpicture}
\end{center}
\caption{``Twin'' nodes that lead to two distinct (but isomorphic) hypergraphs.}
\label{fig:twin nodes}
\end{subfigure}

\vspace{2em}
  
\begin{tikzpicture}
  \coordinate (c1) at (\gridstep,0);
  \coordinate (c2) at (0,0);
  \coordinate (c3) at (0,-\gridstep);
  \coordinate (c4) at ([shift={(-45:3)}] c1);
  \coordinate (c5) at ([shift={(-45:3)}] c2);
  \coordinate (c7) at ([shift={(-135:3)}] c1);
  \coordinate (c8) at ([shift={(-135:3)}] c2);
  \coordinate (c9) at ([shift={(-135:3)}] c3);
  \coordinate (c10) at ([yshift= -3cm] c4);
  \coordinate (c11) at ([yshift= -3cm] c5);
  \coordinate (c13) at ([yshift= -3cm] c7);
  \coordinate (c15) at ([yshift= -3cm] c9);
  \setcounter{nodecounter}{0}
  \nodeauto;
  \nodeauto;
  \nodeautocol{red};
  \nodel{4}{1}{c4}{};
  \nodel{5}{2}{c5}{};
  \nodel{7}{1}{c7}{};
  \nodel{8}{2}{c8}{blue};
  \nodel{9}{3}{c9}{};
  \nodel{10}{1}{c10}{};
  \nodel{11}{2}{c11}{};
  \nodel{13}{1}{c13}{};
  \nodel{15}{3}{c15}{};
  \node (16) at ([shift={(1.62,-.5)}] c13) {$\equiv$};
  \twohyperedge{c1}{c1}{\larcrad};
  \twohyperedgecol{c1}{c2}{\barcrad}{blue};
  \twohyperedgev{c2}{c3}{\larcrad};
  \twohyperedge{c4}{c4}{\larcrad};
  \twohyperedgecol{c4}{c5}{\barcrad}{red};
  \twohyperedge{c5}{c5}{\larcrad};
  \twohyperedge{c7}{c7}{\larcrad};
  \twohyperedgev{c8}{c9}{\larcrad};
  \twohyperedge{c10}{c10}{\larcrad};
  \twohyperedge{c11}{c11}{\larcrad};
  \twohyperedge{c13}{c13}{\larcrad};
  \twohyperedge{c15}{c15}{\larcrad};
  \draw[->, >=latex, red] ([shift={(-45:1)}] c1) to node[right]{node} ([shift={(135:1)}]c4);
  \draw[->, >=latex, blue] ([shift={(-135:1)}] c2) to node[left]{edge} ([shift={(45:1)}]c8);
  \draw[->, >=latex, red] ([shift={(.5\gridstep,-1)}] c5) to node[right]{edge} ([shift={(.5\gridstep,.5)}]c11);
  \draw[->, >=latex, blue] ([shift={(.5\gridstep,-.5)}] c9) to node[left]{node} ([shift={(.5\gridstep,1.5)}]c15);
\end{tikzpicture}
\end{center}
\caption{Two sequences of rewritings that diverge in the first step and converge to isomorphic hypergraphs
  in the second step.}
\label{fig:confluence iso}
\end{subfigure}
\end{minipage}\hfill
\begin{minipage}[b]{.53\linewidth}
\begin{subfigure}[b]{\linewidth}
\begin{center}
  {
\begin{tabular}{rl}    
  &
  \begin{tikzpicture}
  \coordinate (c1) at (5.3*\gridstep,0);
  \coordinate (c2) at (4.3*\gridstep,0);
  \coordinate (c3) at (3.3*\gridstep,0);
  \coordinate (c4) at (2.3*\gridstep,0);
  \coordinate (c5) at (1.3*\gridstep,0);
  \coordinate (c6) at (0,0);
  \coordinate (c7) at (60:1.3*\gridstep);
  \setcounter{nodecounter}{0}
  \nodeautocol{red};
  \nodeauto;
  \nodeauto;
  \nodeauto;
  \nodeauto;
  \nodeauto;
  \nodeauto;
  \twohyperedge{c1}{c2}{\larcrad};
  \twohyperedge{c2}{c3}{\barcrad};
  \twohyperedge{c3}{c4}{\larcrad};
  \twohyperedge{c4}{c5}{\barcrad};
  \twohyperedge{c5}{c6}{\larcrad};
  \trighyperedge{c5}{c6}{c7};
\end{tikzpicture}

\\[0.7em]
\begin{tikzpicture}
  \node (N1) at (0,.25) {$\ton$};
  \draw[draw=none] (N1) circle(\offsetrad);
\end{tikzpicture}
&
\begin{tikzpicture}
  \coordinate (c2) at (4.3*\gridstep,0);
  \coordinate (c3) at (3.3*\gridstep,0);
  \coordinate (c4) at (2.3*\gridstep,0);
  \coordinate (c5) at (1.3*\gridstep,0);
  \coordinate (c6) at (0,0);
  \coordinate (c7) at (60:1.3*\gridstep);
  \setcounter{nodecounter}{1}
  \nodeauto;
  \nodeauto;
  \nodeauto;
  \nodeauto;
  \nodeauto;
  \nodeauto;
  \twohyperedge{c2}{c2}{\larcrad};
  \twohyperedgecol{c2}{c3}{\barcrad}{red};
  \twohyperedge{c3}{c4}{\larcrad};
  \twohyperedge{c4}{c5}{\barcrad};
  \twohyperedge{c5}{c6}{\larcrad};
  \trighyperedge{c5}{c6}{c7};
\end{tikzpicture}

\\[0.7em]
\begin{tikzpicture}
  \node (E2) at (0,.25) {$\toe$};
  \draw[draw=none] (E2) circle(\offsetrad);
\end{tikzpicture}
&
\begin{tikzpicture}
  \coordinate (c2) at (4.3*\gridstep,0);
  \coordinate (c3) at (3.3*\gridstep,0);
  \coordinate (c4) at (2.3*\gridstep,0);
  \coordinate (c5) at (1.3*\gridstep,0);
  \coordinate (c6) at (0,0);
  \coordinate (c7) at (60:1.3*\gridstep);
  \setcounter{nodecounter}{1}
  \nodeauto;
  \nodeautocol{red};
  \nodeauto;
  \nodeauto;
  \nodeauto;
  \nodeauto;
  \twohyperedge{c2}{c2}{\larcrad};
  \twohyperedge{c3}{c4}{\larcrad};
  \twohyperedge{c4}{c5}{\barcrad};
  \twohyperedge{c5}{c6}{\larcrad};
  \trighyperedge{c5}{c6}{c7};
\end{tikzpicture}

\\[0.7em]
\begin{tikzpicture}
  \node (N3) at (0,.25) {$\ton$};
  \draw[draw=none] (N3) circle(\offsetrad);
\end{tikzpicture}
&
\begin{tikzpicture}
  \coordinate (c2) at (4.3*\gridstep,0);
  \coordinate (c4) at (2.3*\gridstep,0);
  \coordinate (c5) at (1.3*\gridstep,0);
  \coordinate (c6) at (0,0);
  \coordinate (c7) at (60:1.3*\gridstep);
  \nodelc{2}{};
  \setcounter{nodecounter}{3}
  \nodeauto;
  \nodeauto;
  \nodeauto;
  \nodeauto;
  \twohyperedge{c2}{c2}{\larcrad};
  \twohyperedge{c4}{c4}{\larcrad};
  \twohyperedgecol{c4}{c5}{\barcrad}{red};
  \twohyperedge{c5}{c6}{\larcrad};
  \trighyperedge{c5}{c6}{c7};
\end{tikzpicture}

\\[0.7em]
\begin{tikzpicture}
  \node (E4) at (0,.25) {$\toe$};
  \draw[draw=none] (E4) circle(\offsetrad);
\end{tikzpicture}
&
\begin{tikzpicture}
  \coordinate (c2) at (4.3*\gridstep,0);
  \coordinate (c4) at (2.3*\gridstep,0);
  \coordinate (c5) at (1.3*\gridstep,0);
  \coordinate (c6) at (0,0);
  \coordinate (c7) at (60:1.3*\gridstep);
  \nodelc{2}{};
  \setcounter{nodecounter}{3}
  \nodeauto;
  \nodeauto;
  \nodeauto;
  \nodeauto;
  \twohyperedge{c2}{c2}{\larcrad};
  \twohyperedge{c4}{c4}{\larcrad};
  \twohyperedge{c5}{c6}{\larcrad};
  \trighyperedge{c5}{c6}{c7};
\end{tikzpicture}
\end{tabular}
}
\end{center}
\caption{Sequence of rewritings where only one rule can be applied each time.}
\label{fig:sequence one rule}
\end{subfigure}

\end{minipage}
\caption{Examples of rewrite rules.}
\label{fig:example}
\end{figure}

\paragraph*{Hypergraph isomorphism.}
We now formally define the notion of \emph{hypergraph isomorphism}:

\begin{definition}[Hypergraph isomorphism]
  Let $\hg H = (V,E,I)$ and $\hg H' = (V',E',I')$ be two hypergraphs. We
  say that $\hg H$ is \emph{isomorphic} to $\hg H'$ if there exist two
  bijective functions $f \colon V' \to V$ and $g \colon E' \to E$ which satisfy the following:
  for every $v \in V'$ and $e \in E'$, we have $v \mathrel{I'} e$ if and only if we have
  $f(v) \mathrel{I} g(e)$.
\end{definition}

It is clear that hypergraph isomorphism is an equivalence relation; we
denote it by~$\iso$. For a hypergraph $\hg H$, we denote its equivalence class
under hypergraph isomorphism by $\chg H$.

\paragraph*{Problem statement.}
In this work, we study the rewrite rules of
edge-domination and node-domination on hypergraphs. 
Our goal is to show that these rules are confluent up to isomorphism. In other
words, our goal is to show \cref{thm:RR confluence informal}, which we
reformulate here:

\begin{theorem}[Confluence under isomorphism]\label{thm:RR confluence under isomorphism}
  Let $\hg H, \hg H_1, \hg H_2$ be three hypergraphs such that $\hg H \to^* \hg
  H_1$ and $\hg H \to^* \hg H_2$. Then there exist two hypergraphs $\hg H_3$
  and $\hg H_4$ such that $\hg H_1 \to^* \hg H_3$, $\hg H_2 \to^* \hg H_4$ and
  $\hg H_3 \iso \hg H_4$.
\end{theorem}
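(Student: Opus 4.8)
The plan is to derive \cref{thm:RR confluence under isomorphism} from Newman's lemma~\cite{biblio:N42,biblio:K85}, applied to the rewrite relation considered \emph{modulo isomorphism}. I would proceed in three steps.

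\emph{Step 1 (lift the relation to isomorphism classes).} First, $\to$ is compatible with $\iso$: if $\hg G \iso \hg G'$ and $\hg G \to \hg G_1$ by deleting a dominated edge $e'$ (resp.\ node $v$), then the corresponding edge (resp.\ node) of $\hg G'$ is also dominated, because the defining inclusion $V(e) \subseteq V(e')$ (resp.\ $E(v) \subseteq E(v')$) is transported by the isomorphism, and deleting it from $\hg G'$ yields a hypergraph isomorphic to $\hg G_1$. This lets me define a relation $\Rightarrow$ on isomorphism classes by $\chg H \Rightarrow \chg{H'}$ iff $\hg G \to \hg G'$ for some $\hg G \iso \hg H$ and $\hg G' \iso \hg H'$; compatibility then implies that \emph{every} representative of $\chg H$ has a $\to$-successor isomorphic to $\hg H'$. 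Since $|V|+|E|$ is an isomorphism invariant that strictly decreases along $\to$ (\cref{claim:RR terminates}), the relation $\Rightarrow$ terminates.

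\emph{Step 2 (local confluence modulo isomorphism).} By Step 1 and Newman's lemma it suffices to show: whenever $\hg H \to \hg H_1$ and $\hg H \to \hg H_2$, there are $\hg H_3 \iso \hg H_4$ with $\hg H_1 \to^* \hg H_3$ and $\hg H_2 \to^* \hg H_4$. I would do a case analysis on the rules applied. If the two steps delete the same element, $\hg H_1 = \hg H_2$. If one step is edge-domination deleting $e'$ (dominator $e$) and the other is node-domination deleting $v$ (dominator $v'$), then $e$ still dominates $e'$ after deleting $v$, and $v'$ still dominates $v$ after deleting $e'$ (deleting a node does not change which other nodes a hyperedge is incident to, and dually deleting an edge does not change which other edges a node is incident to), so $\hg H_1$ and $\hg H_2$ both reduce in one more step to the \emph{same} hypergraph $(\hg H \setminus e') \setminus v$. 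If both steps are edge-dominations, deleting $e_1'$ (dominator $e_1$) and $e_2'$ (dominator $e_2$) with $e_1' \neq e_2'$, I would use that edge deletions do not affect vertex-incidences: a dominator of $e_2'$ survives in $\hg H_1$, namely $e_2$, or, if $e_2 = e_1'$, then $e_1$ (using $V(e_1) \subseteq V(e_1') = V(e_2) \subseteq V(e_2')$), unless $e_1 = e_2'$; symmetrically in $\hg H_2$. Absent such an obstruction, both $\hg H_1$ and $\hg H_2$ reduce to $\hg H \setminus \{e_1',e_2'\}$; the only remaining situation, $e_2 = e_1'$ and $e_1 = e_2'$, forces $V(e_1) = V(e_1')$, whence $\hg H_1$ and $\hg H_2$ are \emph{already} isomorphic via the bijection swapping these two ``twin'' edges. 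Two node-dominations are handled symmetrically, with twin nodes in place of twin edges.

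\emph{Step 3 (conclude).} Newman's lemma yields that $\Rightarrow$ is confluent. Given $\hg H \to^* \hg H_1$ and $\hg H \to^* \hg H_2$, this produces a common $\Rightarrow$-reduct of $\chg{H_1}$ and $\chg{H_2}$; unfolding the two witnessing $\Rightarrow$-chains step by step into $\to$-chains (using compatibility from Step 1) produces $\hg H_3,\hg H_4$ with $\hg H_1 \to^* \hg H_3$, $\hg H_2 \to^* \hg H_4$, and $\hg H_3 \iso \hg H_4$, since both lie in that common class. I expect Step 2 to be the main obstacle — specifically the bookkeeping in the two same-rule cases, where one must verify that the dominator used is not itself deleted and cleanly isolate the degenerate ``twin'' sub-case in which the two one-step reducts are isomorphic without being equal; Steps 1 and 3 are the routine ``Newman modulo isomorphism'' scaffolding, but they still need to be spelled out carefully.
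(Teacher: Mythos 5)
Your proposal is correct and follows essentially the same route as the paper: lifting $\to$ to isomorphism classes, applying Newman's lemma via termination, and proving local confluence by the same case analysis — including the key sub-case where the witness $e_2 = e_1'$ is itself deleted (replaced by $e_1$ via the chained inclusion) and the degenerate ``twin'' case $e_1 = e_2'$, $e_2 = e_1'$ where the two one-step reducts are already isomorphic. The only difference is presentational (you spell out the compatibility of $\to$ with $\iso$ and the unfolding of class-level chains, which the paper leaves as an easy claim).
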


Note that, as in the introduction, this directly implies the following corollary:

\begin{corollary}[Unicity of minimal hypergraphs up to isomorphism]\label{coro:unicity minimal hg}
  Let $\hg H$ be any hypergraph, and let $\hg H_1$ and $\hg
  H_2$ be two minimal hypergraphs of $\hg
  H$. Then $\hg H_1$ and $\hg H_2$ are isomorphic.
\end{corollary}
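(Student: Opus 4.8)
The plan is to obtain this as an immediate consequence of \cref{thm:RR confluence under isomorphism}. Given two minimal hypergraphs $\hg H_1$ and $\hg H_2$ of $\hg H$, we have by definition $\hg H \to^* \hg H_1$ and $\hg H \to^* \hg H_2$, so the theorem applies and yields hypergraphs $\hg H_3$ and $\hg H_4$ with $\hg H_1 \to^* \hg H_3$, $\hg H_2 \to^* \hg H_4$ and $\hg H_3 \iso \hg H_4$. The key observation is then that a minimal hypergraph admits no nontrivial rewriting step: by \cref{def:minimal hypergraph} there is no $\hg H'$ with $\hg H_1 \to \hg H'$, hence the only reduction sequence witnessing $\hg H_1 \to^* \hg H_3$ is the empty one, so $\hg H_1 = \hg H_3$ (and these are literally equal, not merely isomorphic). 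Symmetrically, $\hg H_2 = \hg H_4$. Chaining these identities with the isomorphism gives $\hg H_1 = \hg H_3 \iso \hg H_4 = \hg H_2$, which is what we want, since $\iso$ is an equivalence relation.

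There is essentially no obstacle here: the only point requiring any care is the remark that ``$\to^*$ out of a minimal hypergraph is empty'', which is immediate from the definition of minimality as the absence of any outgoing $\to$ step and from $\to^*$ being the reflexive–transitive closure of $\to$. One could alternatively phrase the argument purely in terms of \cref{thm:RR confluence informal} instead of \cref{thm:RR confluence under isomorphism}, but the formulation of the latter in terms of $\to^*$ makes the deduction slightly cleaner, so I would use it.
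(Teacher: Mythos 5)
Your proposal is correct and matches the paper's own proof essentially verbatim: apply \cref{thm:RR confluence under isomorphism} to get $\hg H_3$ and $\hg H_4$, then use minimality of $\hg H_1$ and $\hg H_2$ to force $\hg H_1 = \hg H_3$ and $\hg H_2 = \hg H_4$, hence $\hg H_1 \iso \hg H_2$. The extra remark that any reduction sequence out of a minimal hypergraph must be empty is exactly the (implicit) justification the paper relies on, so there is nothing to add.
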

\begin{proof}
  By \cref{thm:RR confluence under isomorphism}, there exist two hypergraphs $\hg H_3$
  and $\hg H_4$ such that $\hg H_1 \to^* \hg H_3$, $\hg H_2 \to^* \hg H_4$ and
  $\hg H_3 \iso \hg H_4$. As $\hg H_1$ and $\hg H_2$ are minimal, we must have $\hg H_1 = \hg H_3$
  and $\hg H_2 = \hg H_4$, hence $\hg H_1 \iso \hg H_2$ indeed.
\end{proof}

In the rest of this note, we prove \cref{thm:RR confluence under isomorphism}.

\section{Proof of \cref{thm:RR confluence under isomorphism}}
\label{sec:proof}

As the rewrite rule $\to$ terminates (\cref{claim:RR terminates}), we will use
\emph{Newman's lemma} (stated below as \cref{lem:Newman}) in order to prove
\cref{thm:RR confluence under isomorphism}.
Thanks to this lemma, it will suffice to prove that $\to$ is \emph{locally
confluent}: we will define this property below, and will then prove that it is
obeyed by our rewrite rules (\cref{prop:RR locally confluent}).
Before all this, however, let us explain how our rewrite rules can be lifted
to equivalence classes:

\begin{definition}[Rewrite rules for classes]
  Let $\leadsto$ be a relation on hypergraphs.
  For two hypergraph classes $\chg{H_1}$ and $\chg{H_2}$, we write
  $\chg{H_1} \leadsto \chg{H_2}$ to mean that there exist
  $\hg H_1' \in \chg{H_1}$ and
  $\hg H_2' \in \chg{H_2}$ such that $\hg H_1' \leadsto \hg H_2'$.
\end{definition}

It is easy to see that the applicability of our rules is preserved up to
isomorphism. This clearly implies that we can lift our rewrite rules to
equivalence classes in the following sense:

\begin{claim}
  \label{claim:CR lift classes}
  Let $\leadsto$ be one of our rewrite rules, i.e. $\leadsto \in \{\toe, \ton, \to\}$.
  For two hypergraph classes $\chg{H_1}$ and $\chg{H_2}$, if $\chg{H_1} \leadsto \chg{H_2}$
  then for all $\hg H_1 \in \chg{H_1}$, there exist $\hg H_2 \in \chg{H_2}$ such that
  $\hg H_1 \leadsto \hg H_2$.
\end{claim}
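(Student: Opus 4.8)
The plan is to \emph{transport} the given one-step rewrite along an isomorphism. Suppose $\chg{H_1} \leadsto \chg{H_2}$; by definition there are representatives $\hg H_1' = (V',E',I') \in \chg{H_1}$ and $\hg H_2' \in \chg{H_2}$ with $\hg H_1' \leadsto \hg H_2'$. I would then fix an arbitrary $\hg H_1 = (V,E,I) \in \chg{H_1}$. Since $\hg H_1 \iso \hg H_1'$, there are bijections $f \colon V' \to V$ and $g \colon E' \to E$ with $v \mathrel{I'} e$ iff $f(v) \mathrel{I} g(e)$ for all $v \in V'$, $e \in E'$. The goal is to perform ``the same'' rewrite step on $\hg H_1$ and check that its result lies in $\chg{H_2}$.

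First I would handle $\leadsto = \toe$. Here $\hg H_1' \toe \hg H_2'$ witnesses edges $e_0 \neq e_0'$ of $E'$ with $\{v \in V' : v \mathrel{I'} e_0\} \subseteq \{v \in V' : v \mathrel{I'} e_0'\}$, and $\hg H_2'$ is obtained by deleting $e_0'$. Since $f$ is a bijection and $f,g$ preserve incidence, one checks that $\{v \in V : v \mathrel{I} g(e)\} = f(\{v \in V' : v \mathrel{I'} e\})$ for every $e \in E'$; hence the inclusion above transports to $\{v \in V : v \mathrel{I} g(e_0)\} \subseteq \{v \in V : v \mathrel{I} g(e_0')\}$, and $g(e_0) \neq g(e_0')$ because $g$ is injective. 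Therefore the edge-domination rule applies to $\hg H_1$ to delete $g(e_0')$, giving $\hg H_1 \toe \hg H_2$ for some $\hg H_2$. It then remains to verify $\hg H_2 \iso \hg H_2'$: the restriction of $f$ to $V'$ and the restriction of $g$ to $E' \setminus \{e_0'\}$ are bijections onto $V$ and $E \setminus \{g(e_0')\}$ respectively, and they still preserve incidence, since the incidence relations of $\hg H_2$ and $\hg H_2'$ are just the restrictions of those of $\hg H_1$ and $\hg H_1'$. The case $\leadsto = \ton$ is entirely symmetric, with the roles of nodes and edges (and of $f$ and $g$) exchanged: the deleted node $v_0$, dominated by some $v_0'$ with $\{e \in E' : v_0 \mathrel{I'} e\} \subseteq \{e \in E' : v_0' \mathrel{I'} e\}$, is sent by $f$ to a node of $\hg H_1$ that is still dominated, and we delete it.

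Finally, for $\leadsto = {\to} = {\toe} \cup {\ton}$: if $\chg{H_1} \to \chg{H_2}$ then some representatives satisfy $\hg H_1' \toe \hg H_2'$ or $\hg H_1' \ton \hg H_2'$, and applying the relevant case above to the fixed $\hg H_1$ produces $\hg H_2 \in \chg{H_2}$ with $\hg H_1 \toe \hg H_2$ or $\hg H_1 \ton \hg H_2$, hence $\hg H_1 \to \hg H_2$. The whole argument is essentially bookkeeping; the only point requiring a little care is keeping track of the directions of $f$ and $g$ and checking that their restrictions remain incidence-preserving bijections. I do not anticipate any genuine obstacle.
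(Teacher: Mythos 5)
Your proof is correct and is exactly the transport-along-the-isomorphism argument that the paper treats as immediate (it only remarks that applicability of the rules is preserved up to isomorphism and gives no written proof); your verification that $V(g(e)) = f(V'(e))$, that the witnesses map to witnesses, and that the restricted bijections remain incidence-preserving fills in those details correctly.
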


We are now ready to introduce local confluence for hypergraph classes:

\begin{definition}[Local confluence]\label{def:RR local confluence}
  The rewrite rule $\to$ is
  \emph{locally confluent for equivalence classes} if for all $\chg H$, $\chg{H_1}$, $\chg{H_2}$ such that
  $\chg{H_1}$ and $\chg{H_2}$ are obtained from $\chg H$ in one step
  then there exists $\chg{H_3}$ such
  that $\chg{H_3}$ can be obtained by applying sequences of rewrite rules from $\chg{H_1}$ and
  $\chg{H_2}$. Formally,
  if for all $\chg H$, $\chg{H_1}$ and $\chg{H_2}$ such that $\chg H \to \chg{H_1}$ and $\chg H \to \chg{H_2}$
  then there exists $\chg{H_3}$ such that $\chg{H_1} \to^* \chg{H_3}$ and $\chg{H_2} \to^* \chg{H_3}$.
\end{definition}

Now, we introduce Newman's lemma adapted to our rewrite rule $\to$ on
equivalence classes:

\begin{lemma}[Newman's lemma for $\to$ \cite{biblio:N42, biblio:K85}]\label{lem:Newman}
  If the rewrite rule $\to$ terminates and is locally confluent, then it is confluent.
\end{lemma}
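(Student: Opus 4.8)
The plan is to give the standard proof of Newman's lemma, carried out directly at the level of equivalence classes. Since $\to$ terminates on equivalence classes, there is no infinite chain $\chg{H_0} \to \chg{H_1} \to \cdots$, so $\to$ is a well-founded relation on classes and we may argue by Noetherian induction along it. The statement I would prove by induction on $\chg{H}$ is precisely confluence from $\chg{H}$: \emph{for all $\chg{H_1}$ and $\chg{H_2}$ with $\chg{H} \to^* \chg{H_1}$ and $\chg{H} \to^* \chg{H_2}$, there is $\chg{H_3}$ with $\chg{H_1} \to^* \chg{H_3}$ and $\chg{H_2} \to^* \chg{H_3}$}.

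For the trivial cases: if $\chg{H} = \chg{H_1}$ then $\chg{H_3} \coloneq \chg{H_2}$ works by reflexivity of $\to^*$, and symmetrically if $\chg{H} = \chg{H_2}$. Otherwise both reduction sequences have length at least one, so I would write them as $\chg{H} \to \chg{A} \to^* \chg{H_1}$ and $\chg{H} \to \chg{B} \to^* \chg{H_2}$. Applying local confluence (\cref{def:RR local confluence}) to the single steps $\chg{H} \to \chg{A}$ and $\chg{H} \to \chg{B}$ yields a class $\chg{C}$ with $\chg{A} \to^* \chg{C}$ and $\chg{B} \to^* \chg{C}$.

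The rest is the classical diagram chase, using the induction hypothesis twice. Since $\chg{H} \to \chg{A}$, the class $\chg{A}$ lies strictly below $\chg{H}$ in the well-founded order, so the induction hypothesis applies to it: from $\chg{A} \to^* \chg{H_1}$ and $\chg{A} \to^* \chg{C}$ we get $\chg{D}$ with $\chg{H_1} \to^* \chg{D}$ and $\chg{C} \to^* \chg{D}$. Composing reductions gives $\chg{B} \to^* \chg{C} \to^* \chg{D}$, hence $\chg{B} \to^* \chg{D}$; and we still have $\chg{B} \to^* \chg{H_2}$. Since $\chg{H} \to \chg{B}$, the induction hypothesis also applies to $\chg{B}$, yielding $\chg{H_3}$ with $\chg{H_2} \to^* \chg{H_3}$ and $\chg{D} \to^* \chg{H_3}$. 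Then $\chg{H_1} \to^* \chg{D} \to^* \chg{H_3}$, so $\chg{H_3}$ witnesses confluence from $\chg{H}$, which closes the induction.

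I do not expect a genuine obstacle here, since Newman's lemma is textbook and the passage to equivalence classes does not disturb the argument. The two points worth stating explicitly are that termination of $\to$ on classes is used precisely in the guise ``$\to$ is well-founded on classes'', legitimising the Noetherian induction, and that everything in this proof is manipulated purely as a relation on classes, the representative-level bookkeeping having already been handled by \cref{claim:CR lift classes}, so that no hypergraph isomorphisms need be tracked at this stage.
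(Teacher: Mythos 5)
Your proof is correct: it is the classical well-founded (Noetherian) induction proof of Newman's lemma, and the paper itself does not prove this statement but simply cites it, so your argument is exactly the one contained in the cited references and there is no divergence to report. The only point you assert rather than justify is that termination lifts from hypergraphs (\cref{claim:RR terminates}) to equivalence classes, but this is immediate since every rule application strictly decreases the number of nodes plus hyperedges and isomorphic hypergraphs have the same count, so the Noetherian induction on classes is legitimate.
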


Hence, all that remains to show is the local confluence of $\to$:

\begin{proposition}\label{prop:RR locally confluent}
  The rewrite rule $\to$ is locally confluent for equivalence classes. 
\end{proposition}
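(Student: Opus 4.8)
The plan is to pass to concrete representatives and then to do a case analysis on which two rules were applied. By \cref{claim:CR lift classes}, it suffices to establish the following concrete version of local confluence: for all hypergraphs $\mathcal{H}, \mathcal{H}_1, \mathcal{H}_2$ with $\mathcal{H} \to \mathcal{H}_1$ and $\mathcal{H} \to \mathcal{H}_2$ in one step, there are hypergraphs $\mathcal{H}_3, \mathcal{H}_4$ with $\mathcal{H}_1 \to^* \mathcal{H}_3$, $\mathcal{H}_2 \to^* \mathcal{H}_4$ and $\mathcal{H}_3 \iso \mathcal{H}_4$. One then recovers the statement on equivalence classes as follows: given $\chg H \to \chg{H_1}$ and $\chg H \to \chg{H_2}$, fix a representative $\mathcal{H}$ of $\chg H$, use \cref{claim:CR lift classes} to obtain representatives $\mathcal{H}_1 \in \chg{H_1}$ and $\mathcal{H}_2 \in \chg{H_2}$ reached from $\mathcal{H}$ in one step, apply the concrete statement, lift the two reduction chains to $\to^*$ on classes, and take $\chg{H_3} \coloneq \chg{H_3}$ with $\mathcal H_3$ as above (this is licit since $\mathcal{H}_3 \iso \mathcal{H}_4$). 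There are then three cases for the concrete statement: both steps are edge-dominations, both are node-dominations, or one is of each kind. The second case follows from the first by the symmetry of the whole setup under exchanging nodes with edges (transposing the incidence matrix), which swaps $\toe$ with $\ton$ and preserves $\iso$; so it is enough to treat the first and third cases.

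For the mixed case, suppose $\mathcal{H}_1$ is obtained by removing an edge $e'$ with witness $e$ (so $e \neq e'$ and $V(e) \subseteq V(e')$), and $\mathcal{H}_2$ by removing a node $v$ with witness $v'$ (so $v \neq v'$ and $E(v) \subseteq E(v')$). These deletions act on objects of different kinds and do not destroy each other's witnesses: in $\mathcal{H}_1$ the nodes $v,v'$ are both present, $v \neq v'$, and removing $e'$ from both $E(v)$ and $E(v')$ preserves $E(v) \subseteq E(v')$, so node-domination removes $v$; symmetrically, $V(e) \subseteq V(e')$ survives the deletion of $v$, so edge-domination removes $e'$ from $\mathcal{H}_2$. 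Both results equal $\mathcal{H}$ with $v$ and $e'$ deleted, so here $\mathcal{H}_3 = \mathcal{H}_4$ and no isomorphism is even needed.

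For the pure edge case, let $\mathcal{H}_1$ remove $e_1'$ (witness $e_1 \neq e_1'$, $V(e_1) \subseteq V(e_1')$) and $\mathcal{H}_2$ remove $e_2'$ (witness $e_2 \neq e_2'$, $V(e_2) \subseteq V(e_2')$). If $e_1' = e_2'$ then $\mathcal{H}_1 = \mathcal{H}_2$ and we are done. Otherwise the intended common target is $\mathcal{H}$ with both $e_1'$ and $e_2'$ removed. To reach it from $\mathcal{H}_1$ we must still be able to remove $e_2'$: the witness $e_2$ works unless $e_2 = e_1'$, in which case $V(e_1') \subseteq V(e_2')$, hence $V(e_1) \subseteq V(e_2')$, so $e_1$ serves as witness instead — unless moreover $e_1 = e_2'$, which forces $V(e_1') = V(e_2')$ and makes $\mathcal{H}_1 \iso \mathcal{H}_2$ (they differ only in which of two edges with the same node-set was removed, and swapping their names is an isomorphism fixing everything else). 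The argument from $\mathcal{H}_2$ is symmetric. This settles the case, and hence \cref{prop:RR locally confluent}.

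The step I expect to be the crux is precisely the bookkeeping in the pure edge (equivalently, pure node) case: one must check that whenever the natural domination witness has itself been deleted by the other step, transitivity of set inclusion supplies a replacement witness, or else the two one-step reducts are already isomorphic. This is elementary, but it requires a careful enumeration of the possible coincidences among $e_1, e_1', e_2, e_2'$; the mixed case, the reduction to concrete hypergraphs, and the node/edge duality are all routine.
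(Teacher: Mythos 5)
Your proposal is correct and follows essentially the same route as the paper: pass to concrete representatives via \cref{claim:CR lift classes}, note that the mixed node/edge case commutes exactly, handle the pure edge case by enumerating the coincidences among $e_1,e_1',e_2,e_2'$ (using transitivity of $\subseteq$ for a replacement witness, or an edge-swapping isomorphism when $e_1=e_2'$ and $e_2=e_1'$), and dispose of the pure node case by node/edge duality. The only differences are cosmetic, e.g.\ you treat $e_1'=e_2'$ directly instead of assuming $\chg{H_1}\neq\chg{H_2}$ upfront.
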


The rest of this section is devoted to the proof of this result, which 
suffices to show \cref{thm:RR confluence under isomorphism}. As it turns out, we
will prove a stronger property than local confluence, in that we will only need
at most one single rule application\footnote{This
is close to what is sometimes called the \emph{diamond property}, which allows
exactly one rule application.} to reach isomorphic hypergraphs (whereas the
definition of local confluence would allow an arbitrarily long sequence).

\begin{proof}[Proof of \cref{prop:RR locally confluent}:] Let $\chg H$, $\chg{H_1}$ and
$\chg{H_2}$ be three equivalence classes such that $\chg{H_1}$ and $\chg{H_2}$ are
obtained by one rewrite rule application from $\chg H$, i.e.,
$\chg H \to \chg{H_1}$ and $\chg H \to \chg{H_2}$. We can assume that $\chg{H_1} \neq \chg{H_2}$,
as otherwise the result immediately holds.
Let $\hg H = (V,E,I)$,
$\hg H_1 = (V_1,E_1,I_1)$ and $\hg H_2 = (V_2,E_2,I_2)$ be representatives
of $\chg H$, $\chg{H_1}$ and $\chg{H_2}$, respectively, so that $\hg H \to \hg H_1$ and
$\hg H \to \hg H_2$.
Note that we can choose the same hypergraph $\hg H$ as a representative of $\chg H$ for both rewrite rules according to
\cref{claim:CR lift classes}.
We do a case disjunction on the rewrite rule that was
applied in the two cases:

\begin{enumerate}[label=(\roman*)]
\item
  \label{proof:double edge-domination}
  If $\hg H \toe \hg H_1$ and $\hg H \toe \hg H_2$:\\
  Let $e_1, e_1', e_2, e_2' \in E$ be the witnessing hyperedges for the two rule
    applications: 
    we have that $e_1' \neq e_2'$,
    that  $V(e_1) \subseteq V(e_1')$, and that $V(e_2) \subseteq V(e_2')$,
    and we have that $\hg H_1$ is obtained from $\hg H$
    by removing $e_1'$ ($V_1 = V$, $E_1 = E \setminus \{e_1'\}$ and
    $I_1 = \{(v,e) \in I \colon e \neq e_1'\}$) and
  likewise $\hg H_2$ is obtained from $\hg H$ by removing $e_2'$.
  Let $\hg H_3 = (V_3,E_3,I_3)$ be the hypergraph obtained from $\hg H$ by removing
  $e_1'$ and $e_2'$, i.e., we set $V_3 \coloneq V$, $E_3 \coloneq E \setminus \{e_1', e_2'\}$
    and $I_3 \coloneq \{(v,e) \in I \colon e \neq e_1' \text{~and~} e \neq e_2'\}$.
  We have several easy sub-cases.
  \begin{itemize}  
  \item If $e_1' \neq e_2$ and $e_2' \neq e_1$ then, as we already have that
    $e_1' \neq e_2'$,
    we can simply remove $e_2'$ 
    in $\hg H_1$
    (because $e_2$ is still there and we still have $V(e_2) \subseteq V(e'_2)$) to yield $\hg H_3$. 
    For the same reason, 
    we can obtain $\hg H_3$ from $\hg H_2$ by removing $e'_1$. 
Therefore $\hg H_1 \toe^* \hg H_3$ and $\hg H_2 \toe^* \hg H_3$, so local confluence is verified.
  \item 
    If $e_1 \neq e_2'$ and $e_2 = e_1'$, then we have $V(e_1) \subseteq V(e_1') \subseteq V(e_2')$
    and once again, we can remove both independently using $e_1$ as a witness, which satisfies local confluence.
  \item Symmetrically, if $e_1 = e_2'$ and $e_2 \neq e_1'$, we have the same result.
  
  \item Otherwise, $e_1 = e_2'$ and $e_2 = e_1'$, and we have $V(e_1) \subseteq
    V(e_1') = V(e_2)$ and $V(e_2) \subseteq V(e'_2) = V(e_1)$, so that $V(e_1) =
    V(e_2)$. We already have that $e_1' \neq e_2'$, so that $e_1 \neq e_2$.
      Here, once we have removed one of the two edges,
      it may not be possible to remove the second one, because the witness
      justifying the applicability of the second edge-removal rule might not exist
      anymore. Nevertheless, in fact we immediately have that $\hg H_1$ is isomorphic to $\hg H_2$, via the bijections 
      $f \colon V_2 \to V_1 = \mathit{id}$ and $g \colon E_2 \to E_1$ defined by 
      $g_{|E \setminus \{e_1, e_2\}} = \mathit{id}$ and $g(e_2) = e_1$. As $\hg H_1$ and $\hg H_2$ are isomorphic, local confluence for equivalence classes holds.
  \end{itemize}
  
\item 
    \label{proof:mixed-domination}
  If $\hg H \ton \hg H_1$ and $\hg H \toe \hg H_2$:\\
  Let $v_1,v_1' \in V$ and $e_2, e_2' \in E$ such that $v_1 \neq v_1'$, $E(v_1) \subseteq E(v_1')$,
  $e_2 \neq e_2'$ and $V(e_2) \subseteq V(e_2')$.
  We have that $\hg H_1$ is obtained by removing $v_1$ ($V_1 = V \setminus
  \{v_1\}$, $E_1 = E$
  and $I_1 = \{(v,e) \in I \colon v\neq v_1\}$)
  and that $\hg H_2$ is obtained by removing $e_2'$.
  Let $\hg H_3 = (V_3,E_3,I_3)$ be the hypergraph obtained from $\hg H$ in which
    we removed $v_1$ and $e_2'$: formally,
  $V_3 \coloneq V \setminus \{v_1\}$, $E_3 \coloneq E \setminus \{e_2'\}$, and
    $I_3 \coloneq \{(v,e) \in I \colon e \neq e_2' \text{~and~} v \neq v_1\}$.
  We need to show that $\hg H_1 \toe \hg H_3$ and $\hg H_2 \ton \hg H_3$.

  In $\hg H_2$,
  in all cases if we removed $e_2'$ from $E(v_1)$ then we also erased it from $E(v_1')$.
  More formally, since $E(v_1) \subseteq E(v_1')$, and since we have $E_2(v_1) = E(v_1) \setminus \{e_2'\}$ and
  the same for $E_2(v_1')$, then we have $E_2(v_1) \subseteq E_2(v_1')$.
  This means we can apply the node-domination rule to $\hg H_2$ on~$v_1$ and
  $v_1'$ and obtain $\hg H_3$, hence $\hg H_2 \ton \hg H_3$.

  Likewise, in $\hg H_1$, the situation is analogous up to exchanging the roles
  of nodes and edges: if we removed $v_1'$ from $E(e_2)$ then we also erased it
  from~$E(e_2')$.
  Hence $\hg H_1 \toe \hg H_3$.

\item If $\hg H \toe \hg H_1$ and $\hg H \ton \hg H_2$:\\
  As we can exchange
  the role of $\chg{H_1}$ and $\chg{H_2}$ without loss of generality, this case
    is symmetric to 
    \cref{proof:mixed-domination}.

\item Else $\hg H \ton \hg H_1$ and $\hg H \ton \hg H_2$:\\
  This case is analogous to \cref{proof:double edge-domination} but exchanging
  the roles of nodes and of edges, considering the node-domination rule
  instead of the edge-domination rule, and reversing all inclusion
  relationships on sets of incident objects.
\end{enumerate}

In all cases, the reasoning is independent of the representatives. We conclude that
$\chg{H_1} \to^* \chg{H_3}$ and $\chg{H_2} \to^* \chg{H_3}$.
\end{proof}

From \cref{prop:RR locally confluent} together with \cref{claim:RR terminates},
using \cref{lem:Newman} we deduce the main result of this note (\cref{thm:RR confluence under
isomorphism}).

\pagebreak

\bibliographystyle{plain}
\bibliography{references}

\begin{thebibliography}{1}

\bibitem{biblio:AGMM25}
Antoine Amarilli, Wolfgang Gatterbauer, Neha Makhija, and Mika{\"e}l Monet.
\newblock \href{https://arxiv.org/abs/2412.09411}{Resilience for regular path
  queries: Towards a complexity classification}.
\newblock {\em Proceedings of the ACM on Management of Data}, 3(2), 2025.

\bibitem{biblio:Fer10}
Henning Fernau.
\newblock \href{https://doi.org/10.1007/s00453-008-9199-6}{A top-down approach
  to search-trees: Improved algorithmics for 3-hitting set}.
\newblock {\em Algorithmica}, 57(1), 2010.

\bibitem{biblio:Fer10WHS}
Henning Fernau.
\newblock
  \href{https://www.sciencedirect.com/science/article/pii/S0304397510000046}{Parameterized
  algorithms for $d$-\textsc{Hitting Set}: The weighted case}.
\newblock {\em Theoretical Computer Science}, 411(16), 2010.

\bibitem{biblio:GN72}
Nemhauser~G.L. Garfinkel~R.S.
\newblock {\em Integer programming}.
\newblock Wiley, 1972.

\bibitem{biblio:K85}
Wolfgang K{\"u}chlin.
\newblock \href{https://link.springer.com/chapter/10.1007/3-540-15984-3_294}{A
  confluence criterion based on the generalised Newman Lemma}.
\newblock In {\em EUROCAL}, 1985.

\bibitem{biblio:N42}
M.~H.~A. Newman.
\newblock \href{http://www.jstor.org/stable/1968867}{On theories with a
  combinatorial definition of "equivalence"}.
\newblock {\em Annals of Mathematics}, 43(2), 1942.

\bibitem{biblio:SC10}
Lei Shi and Xuan Cai.
\newblock \href{https://ieeexplore.ieee.org/document/5533149}{An exact fast
  algorithm for minimum hitting set}.
\newblock In {\em CSO}, volume~1, 2010.

\bibitem{biblio:Wund24}
Marcus Wunderlich.
\newblock
  \href{https://web.archive.org/web/20240423091821/https://scale.iti.kit.edu/_media/resources/theses/ma_marcus_wunderlich.pdf}{Theoretical
  study of data reduction in real-world hitting set instances}, 2024.
\newblock Master's thesis.

\end{thebibliography}

\vfill
\doclicenseThis

\end{document}